\newcommand{\Prob}{\mathbb{P}}
\newcommand{\var}{\text{Var}}
\newcommand{\Expc}{\mathbb{E}}
\def\thm@space@setup{\thm@preskip=2pt
        \thm@postskip=2pt \itshape}
\begin{document}
\setlength{\belowcaptionskip}{-6pt}
        \setlength{\abovedisplayskip}{1mm}
        \setlength{\belowdisplayskip}{1mm}
        \setlength{\abovecaptionskip}{1mm}

\newtheoremstyle{newstyle}
{} 
{} 
{\mdseries} 
{} 
{\bfseries} 
{.} 
{ } 
{} 

\theoremstyle{newstyle}

\newtheorem{theorem}{Theorem}
\newtheorem{lemma}{Lemma}
\newtheorem{proposition}{Proposition}
\newtheorem{corollary}{Corollary}

\theoremstyle{definition}
\newtheorem{example}{Example}
\newtheorem{definition}{Definition}

\theoremstyle{remark}
\newtheorem{remark}{Remark}
\newtheorem{claim}{Claim}

\sloppy

               \setlength{\belowcaptionskip}{-6pt}
        \setlength{\abovedisplayskip}{1mm}
        \setlength{\belowdisplayskip}{1mm}
        \setlength{\abovecaptionskip}{1mm}

        \title{Latency Analysis of Coded Computation Schemes over Wireless Networks}
        \author{\vspace{4mm}\large{Amirhossein Reisizadeh, Ramtin Pedarsani}\\
                 Department of Electrical and Computer Engineering\\
                 University of California, Santa Barbara\\
                 reisizadeh@ucsb.edu, ramtin@ece.ucsb.edu~~
                }
        
\maketitle

\begin{abstract}
Large-scale distributed computing systems face two major bottlenecks that limit their scalability: straggler delay caused by the variability of computation times at different worker nodes and communication bottlenecks caused by shuffling data across many nodes in the network. Recently, it has been shown that codes can provide significant gains in overcoming these bottlenecks. In particular, optimal coding schemes for minimizing latency in distributed computation of linear functions and mitigating the effect of stragglers was proposed in \cite{lee2016speeding} for a wired network, where the workers can simultaneously transmit messages to a master node without interference. 
In this paper, we focus on the problem of coded computation over a wireless master-worker setup with straggling workers, where only one worker can transmit the result of its local computation back to the master at a time. We consider 3 asymptotic regimes (determined by how the communication and computation times are scaled with the number of workers) and precisely characterize the total run-time of the distributed algorithm and optimum coding strategy in each regime. In particular, for the regime of practical interest where the computation and communication times of the distributed computing algorithm are comparable, we show that the total run-time approaches a simple lower bound that decouples computation and communication, and demonstrate that coded schemes are $\Theta(\log(n))$ times faster than uncoded schemes. 

\end{abstract}

\section{Introduction}

Modern large-scale computing systems are driven by scaling out computations across many small machines. While current distributed computing systems are mainly based on cloud computation with abundance of computational resources, there has been increasing interest in distributed computation over wireless networks where the computational nodes are mobile or wireless devices \cite{datla2012wireless,drolia2013case}. In a recent report by Google research, ``Federated Learning'' is proposed that enables mobile phones to collaboratively learn a shared prediction model while keeping all the training data on device, decoupling the ability to do machine learning from the need to store the data in the cloud \cite{googleresearch}. Further, wireless distributed computing enables many emerging mobile applications such as voice recognition, image processing, and virtual reality to get carried out without putting extra burden to the cloud. It can further improve the quality of service and latency of such applications.

Distributed computing systems encounter two major bottlenecks that limit their scalability: (i) Straggler delay bottleneck which is due to the latency in waiting for the slowest nodes to finish their computation tasks; (ii) Communication bottleneck which is due to the need in shuffling massive amounts of data over many nodes in the distributed algorithm. The traditional approach for mitigating these bottlenecks is to introduce computation redundancy in the form of task replicas. For example, replicating the straggling task on another available node is a common approach to mitigate the effect of stragglers \cite{zaharia2008improving}. However, recent results have proposed the use of ``coding schemes'' in introducing clever computational redundancy to deal with both stragglers and communication bottlenecks. 

\subsection{Related Works on Coded Computation}
The use of codes for minimizing latency in distributed computation of linear functions was introduced in \cite{lee2016speeding}. The key idea is to use erasure codes to inject redundancy such that the minimum latency is achieved by trading off the number of stragglers that the algorithm is robust to with redundancy factor in computation. In \cite{reisizadeh}, the authors consider coded computation over heterogeneous clusters, and proposed an asymptotically optimal coded algorithm for distributed matrix-vector multiplication. The use of product codes and polynomial codes for high-dimensional matrix multiplication over homogeneous clusters is proposed in \cite{lee-suh} and \cite{yu2017polynomial}, respectively. In a related work \cite{dutta2016short}, the authors propose the use of redundant short dot products to speed up distributed computation of linear transforms. Coded computing of the convolution of two long vectors distributedly in the presence of stragglers is proposed in \cite{dutta2017coded}. Coded computation of nonlinear functions over multicore setups is studied in \cite{multicore}. In \cite{tandon2016gradient}, the authors propose coding schemes for mitigating stragglers in distributed batch gradient computation. The idea of coded computation is utilized in  \cite{yang2017coding} for solving linear inverse problems in a parallelized implementation affected by stragglers.  

The use of codes for minimizing bandwidth in distributed computation was introduced in \cite{cmr,li2016fundamental}, and for coded data shuffling in distributed machine learning algorithms in \cite{lee2016speeding}. In \cite{li2017scalable}, the authors propose a scalable framework for minimizing the communication bandwidth in wireless distributed computing without considering stragglers delay. A unified coded framework for (wired) distributed computing with straggling servers is proposed in \cite{li2016unified}, by introducing a tradeoff between latency of computation and load of communication for some linear computation tasks. In a related word to coded data shuffling, \cite{attia2016information} studies the information theoretic limits of data shuffling in distributed learning. A pliate index coding approach is proposed in \cite{song2017pliable} for data shuffling.  

\subsection{Main Contribution}

All of the coded computation schemes for minimizing stragglers delay that have been proposed in the literature are for wired networks, where the results of computations of the workers can be received simultaneously by the master node, and there exists no interference between their channels. In this paper, we consider the problem of distributedly computing a linear computation task over a wireless master-worker setup with straggling workers, where \emph{only one worker can transmit the result of its local computation back to the master at a time.} This interference model is motivated by classical results in wireless networks such as \cite{tassiulas1992stability}, and can be generalized to the case where the dependency of the workers can be described by defining subsets that can be activated simultaneously. To characterize the total run-time of a distributed computing task, we consider two variables for the \emph{computation time} of each worker and for the \emph{communication time} of each locally computed result from the workers to the master. This separation of computation and communication time is also performed in \cite{multicore}. We propose optimal coding schemes for different regimes for speeding up distributed matrix-vector multiplication in wireless networks with straggling servers. Matrix-vector multiplication is a crucial computation in many distributed  machine learning algorithms such as logistic regression, reinforcement learning and gradient descent based algorithms. Wireless implementations of such algorithms can play a fundamental role in speeding up mobile data analytics and emerging mobile applications.

We now explain the main contribution of this paper. In the original coded computation paper \cite{lee2016speeding}, an optimal MDS code is proposed to minimize the latency of total computation run-time by robustifying the distributed algorithm to a few stragglers. Using an $(n,k)-$MDS code, it is easy to show that this total computation run-time is the $k$-th order statistics of $n$ i.i.d. random variables, assuming that the computation time of each worker is independent and identically distributed (i.i.d.); thus, one can optimize the rate of the code to find the best trade-off between computation redundancy and robustness to stragglers. In a wireless network, the total run-time is not simply determined by the $k$-th order statistics of $n$ i.i.d. random variables; instead, the system has \emph{memory} as the state of the computation is determined by the number of computations that are completed and the number of local computed results that are transmitted back to the master node. While exact characterization of the total run-time and optimal coding strategy seems to be intractable for arbitrary system parameters, in this paper we consider 3 asymptotic regimes (determined by how the communication and computation times are scaled with the number of workers) and precisely characterize the total run-time of the distributed algorithm and optimum coding strategy in each regime. In particular, for the regime of practical interest where the computation and communication times of the distributed computing algorithm are comparable, we show that the total run-time approaches a simple lower bound that decouples computation and communication, and demonstrate that coded schemes are $\Theta(\log(n))$ times faster than uncoded schemes. 

\subsection{Notation}

In this paper, we use bold face small and capital letters for vectors and matrices respectively, capital letters for random variables, and lower cases for constants. For $i\in \mathbb{N}$, we denote by $[i]$ the set $\{1,2,\cdots,i\}$. For non-negative functions $f$ and $g$, we denote $f(n)=\mathcal{O}(g(n))$ if there exists $n_0\in \mathbb{N}$ and $c>0$ such that $f(n) \leq c g(n)$ for $n\geq n_0$; $f(n)=o(g(n))$ if $f(n)/g(n) \rightarrow 0$ as $n \to \infty$. Moreover, $f(n)=\omega(g(n))$ if and only if $g(n)=o(f(n))$ and $f(n)=\Theta(g(n))$ if and only if $f(n)=\mathcal{O}(g(n))$ and $g(n)=\mathcal{O}(f(n))$.

\section{Problem Statement}

\begin{figure}
 \begin{center}
 \includegraphics[width=7cm]{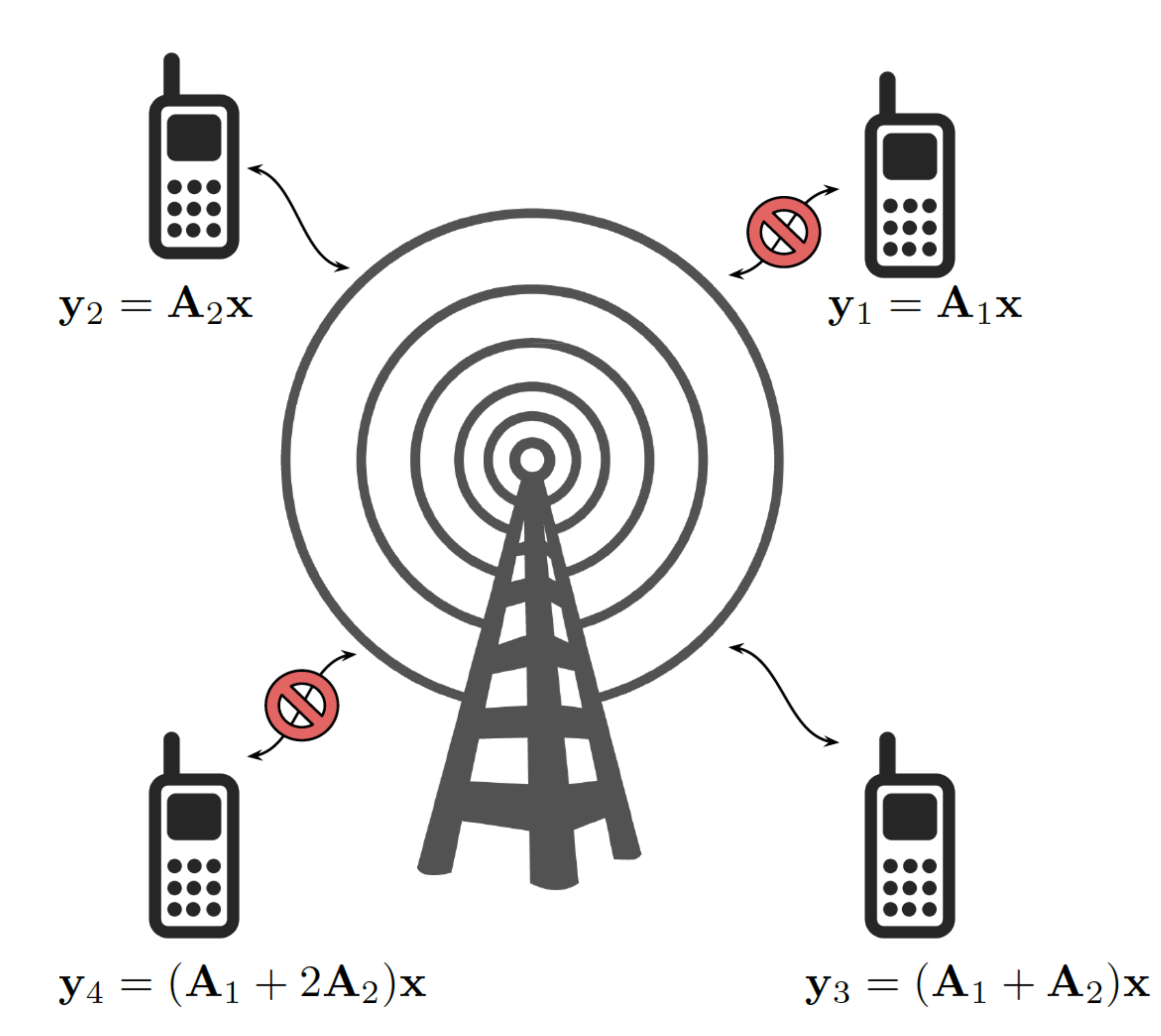}
    \caption{\small{\textbf{Problem formulation.} We consider the problem of matrix-vector multiplication over a wireless master-worker setup, where workers on local computation of coded data and transmit the results of their computations back to the master node. In this model, only one worker can communicate to the master node at a time.}}
    \label{fig:setup}
    \end{center}
\end{figure}
  
Consider a master-worker system consisting of a master node and $n$ computing worker nodes. The goal is to compute the matrix-vector multiplication $\mathbf{y}=\mathbf{Ax}$ distributedly across the worker nodes, where $\mathbf{A}\in\mathbb{R}^{r\times m}$ and $\mathbf{x}\in\mathbb{R}^{ m}$. Every worker has access to the vector $\mathbf{x}=(x_1,\cdots,x_m)^\top$, where each $x_i$ can be represented by $d$ bits, i.e., $x_i\in \mathbb{F}_2^d$, for $i\in [m]$. Thus, there are overall $r=\Theta(n)$ inner product computations required to be carried out by the worker nodes. 
The master node assigns computing the matrix-vector multiplication $\mathbf{y}_i=\widetilde{\mathbf{A}}_i\mathbf{x}$ to worker $i$ for $i \in [n]$ where $\widetilde{\mathbf{A}}_i$'s are functions of $\mathbf{A}$. Due to memory and computation capability constraint on local computing nodes, matrices $\widetilde{\mathbf{A}}_i$'s are usually much smaller in size than the original matrix $\mathbf{A}$. Each worker performs its partial computation and sends the result back to the master node. The master node aggregates the partial computations $\mathbf{y}_i$'s and retrieves $\mathbf{y}$ from a recoverable set of local computations. 

Computation time of each worker node for one inner product is modeled as a shifted-exponential random variable with parameters $(a,\mu)$, i.e.
\begin{equation}
\Prob(T_i\leq t) = 1-e^{-\mu(t-a)},
\end{equation}
for $t\geq a$, where $T_i$ denotes the computation latency of worker node $i\in [n]$ for computing an inner product of two vectors of length $m$. This computation time model is motivated by the distribution model proposed by authors in \cite{liang13} for latency in querying data files from cloud storage systems. 
For the sake of notation simplicity, we denote such a random variable by $T_i \sim a+\exp(\mu)$.
 Workers can communicate with the master node via wireless links with the same rate. We consider a MAC layer model where only one worker can transmit data to the master node at a time. Thus, every worker that finishes its computation, waits till the channel becomes idle and initiates the transmission, and priority is given to nodes that have earlier computation completion times. In other words, communications are carried out in the same order that their corresponding computations are executed. Let $t_{\text{1-cmm}}$ denote the communication time for the result of a single inner product. Furthermore, we assume the communication phase of every worker node is without preemption, i.e. a worker node transmits its result in one round of communication. 
 Since only one worker is able to transmit results to the master node at a time, one needs to reexamine the total execution time (latency) of the distributed computation, $T_{\text{tot}}$, which is defined as the total time of computation and communication to carry out the complete multiplication. As we will see, precisely characterizing the total run-time $T_{\text{tot}}$ is intractable for arbitrary set of problem parameters and requires combinatorial analysis. Instead, we provide asymptotic characterizations of the total latency for three different regimes. We consider coded and uncoded scenarios for a system of one master and $n$ workers and evaluate the total execution time for each scenario.

\section{Main Results: Wireless Coded Computation}
In this section, we investigate the total execution time of coded computation schemes over wireless networks in different regimes of parameters. We first illustrate the key idea of coded computation through a simple example.

\begin{example}
Consider a computation system consisting of one master node and $n=4$ worker nodes as depicted in Figure \ref{fig:setup}. To robustify the computation to stragglers, we carry out a $(4,2)$--MDS coded computation as follows. The master node divides the large matrix $\mathbf{A}$ to two submatrices $\mathbf{A}_1$ and $\mathbf{A}_2$ of the same size:
\begin{equation}
    \mathbf{A}=
    \begin{bmatrix}
    \mathbf{A}_1   \\
    \mathbf{A}_2
\end{bmatrix}.
\end{equation}
Then, the local computations assigned to each worker node is as follows
\begin{align}
\mathbf{y}_1 &=\widetilde{\mathbf{A}}_1\mathbf{x}=\mathbf{A}_1\mathbf{x},\nonumber\\
\mathbf{y}_2 &=\widetilde{\mathbf{A}}_2\mathbf{x}=\mathbf{A}_2\mathbf{x},\nonumber\\
\mathbf{y}_3 &=\widetilde{\mathbf{A}}_3\mathbf{x}=(\mathbf{A}_1+\mathbf{A}_2)\mathbf{x},\nonumber\\
\mathbf{y}_4 &=\widetilde{\mathbf{A}}_4\mathbf{x}=(\mathbf{A}_1+2\mathbf{A}_2)\mathbf{x}.
\end{align}
Clearly, results of any $k=2$ workers (e.g. workers 2 and 3 in Figure \ref{fig:setup}) compose a recoverable set to retrieve 
\begin{equation}
    \mathbf{y}=\mathbf{A}\mathbf{x}=
    \begin{bmatrix}
    \mathbf{A}_1 \mathbf{x}  \\
    \mathbf{A}_2\mathbf{x}
\end{bmatrix}.
\end{equation}
Therefore, the master node only waits for the two fastest workers and recovers the result then.
\end{example}

There are totally $r$ inner products to be computed via $n$ nodes distributedly. We employ an $(n,k)-$MDS code to perform the computation task, i.e. the computation result of any $k$ worker nodes forms a decodable set of inner products. Therefore, every node is assigned $r/k$ inner products to be computed. Each assigned vector to the worker nodes is a random linear combination of the rows of the matrix $\mathbf{A}$, thus the master node assigns $r/k$ random linear combinations of the rows of $\mathbf{A}$ to each worker. More precisely, the computation matrix assigned to worker $i$ is $\widetilde{\mathbf{A}}_i=\mathbf{S}_i\mathbf{A}$ where $\mathbf{S}_i\in \mathbb{R} ^{r/k\times r}$ is the coding matrix for worker $i$. To establish a random linear code, the entries of $\mathbf{S}_i$ are i.i.d. $\mathcal{N}(0,1)$. Worker $i$ computes $\mathbf{y}_i=\widetilde{\mathbf{A}}_i\mathbf{x}$ and sends back the result to the master node. Upon receiving $r$ inner products, the master node can retrieve the computation $\mathbf{y}=\mathbf{A}\mathbf{x}$ with probability 1. The master node aggregates the results in the form $\mathbf{z}=\mathbf{S}_{(r)}\mathbf{A}\mathbf{x}$,  where $\mathbf{S}_{(r)}\in \mathbb{R}^{r\times r}$ is the aggregated coding matrices which is full-rank with probability 1. Therefore, the master node can recover $\mathbf{y}=\mathbf{A}\mathbf{x}=\mathbf{S}_{(r)}^{-1}\mathbf{z}$. Similarly, one can construct an MDS code to ensure any $k$ out of $n$ responses of the workers suffice for recovering $\mathbf{y}$.

Since each worker node computes $r/k$ inner products, adopting the computation time model in \cite{lee2016speeding}, the random variable denoting the computation time for worker node $i$ can be represented as a constant shift $t_0=ar/k$ added to an exponential term $T_i$ with rate $\mu k/r$, i.e.
\begin{equation}
\Prob(t_0+T_i\leq t) = 1-e^{-\frac{\mu k}{r}(t-t_0)},
\end{equation}
for $t\geq t_0$ and $i \in [n]$, where $T_i$'s are i.i.d.  (see figure \ref{fig:timing}).\footnote{For ease of notation, from now on we represent the random computation time of worker $i$ as the sum of a constant term denoted by $t_0$ and a variable term with exponential distribution denoted by $T_i$.} We denote the ordered sequence of computation times by $\{t_0+T_{(i)}\}_{i=1}^n$  where $t_0+T_{(i)}$ denotes the $i$-th smallest computation time ($i$-th order statistic), i.e. $T_{(1)} \leq T_{(2)} \leq \cdots \leq T_{(n)}$. Moreover, we define the differences in computation times (differential times) as follows 
\begin{align}
D_1 &= T_{(1)},\nonumber\\
D_i &= T_{(i)} - T_{(i-1)},
\end{align}
for $i=2,\cdots,n$. Let $\alpha=\frac{r}{\mu k}$. Differential times are exponential and mutually independent random variables with the following distribution: $D_i \sim \exp(\frac{1}{\alpha}(n-i+1))$. Furthermore, communication time for each worker is $t_{\text{cmm}}=\frac{r}{k}t_{\text{1-cmm}}$ that is a constant term. The master node needs to wait for the results of the fastest $k$ worker nodes. Trivial lower and upper bounds on the total execution time can be derived as follows
\begin{equation}
    t_0+T_{(k)} + t_{\text{cmm}} \leq T_{\text{tot}} \leq t_0+ T_{(k)} + kt_{\text{cmm}},
\end{equation}
almost surly. Computation times $T_{(i)}$'s are i.i.d exponential with rate $\mu k/r$, therefore the expected value of the $i$-th order statistics is $\Expc[T_{(i)}]=\frac{H_n - H_{n-i}}{\mu k/r}$ where $H_i=1+\frac{1}{2}+\cdots+ \frac{1}{i}$ (see Lemma \ref{lemma:expc_approx}). 

One can easily find lower and upper bounds on the total expected run-time as follows: 
\begin{equation}\label{eq:bounds}
    t_0+\alpha (H_n - H_{n-k}) + t_{\text{cmm}} \leq \Expc[T_{\text{tot}} ]\leq t_0+ \alpha (H_n - H_{n-k}) + kt_{\text{cmm}}.
\end{equation}
Clearly, if the channel is idle at the time of the completion of $k$-th computation, one finds that the total run-time is the left-hand side of \eqref{eq:bounds} which implies the lower bound. Further, if one considers a (higher-latency) protocol where all the communications are performed after the $k$-th computation, the upper bound will be derived.

To illustrate the challenge of exactly characterizing the total run-time, we provide a simple example describing the dynamics of the problem.

\begin{example}
Assume $a=\frac{1}{\mu}=1$ second, $n=5$, $k=3$, $r=5$, and $t_{\text{cmm}}=0.2$ seconds. From a random realization, we get $T_{(1)}=0.1138$, $T_{(2)}=0.2725$, $T_{(3)}=0.6458$, $T_{(4)}=0.7033$, and $T_{(5)}=5.5538$ all in seconds. Figure \ref{fig:timing} depicts the time digram of the experiment. Thus, one observes that the communication of the second result cannot be carried out at $t_0 + T_{(2)}$ since the channel is busy. However, the channel is idle at $t_0 + T_{(3)}$. Thus, the overall run-time is the same as the lower bound in \eqref{eq:bounds}.

\begin{figure}[h]
\begin{center}
\includegraphics[width=13cm]{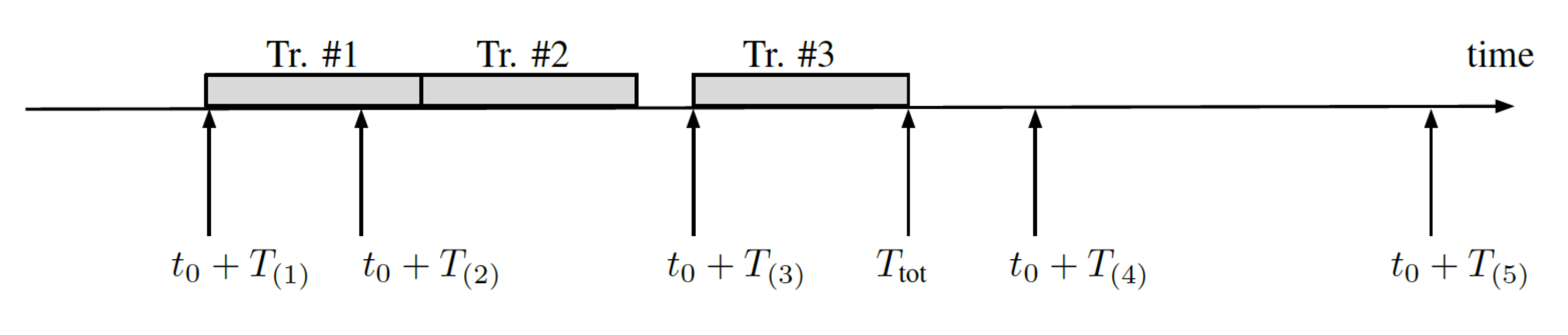}
    \caption{\small{\textbf{Computation and communication time diagram.} This time diagram shows the computation times of each worker and when the result of their computation is transmitted to the master node. One can observe that if the channel is busy transmitting the message of another worker, the communication of other completed computation results have to wait for the channel to get idle.}}
    \label{fig:timing}
    \end{center}
\end{figure}

\end{example}

For the $(n,k)-$MDS coded computation, $k$ is the design parameter and is picked in the order of $n$: $k=\Theta(n)$. Thus, $\alpha=\Theta(1)$. In the following lemma, we show that as $j$ gets large, the random variable $T_{(j)}$ is well-concentrated around its expected value.
\begin{lemma}\label{lemma:expc_approx}
For $j=\Theta(n)$, we have $| T_{(j)} - \mathbb{E}[T_{(j)}] | \leq o(1)$ with probability $1-o(1)$.
\end{lemma}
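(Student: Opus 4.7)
The plan is to exploit the decomposition $T_{(j)} = \sum_{i=1}^{j} D_i$ already given in the excerpt, where the differential times $D_i$ are mutually independent exponentials with rate $(n-i+1)/\alpha$. Since $T_{(j)}$ is a sum of independent random variables, I can control its concentration around the mean by computing the variance and invoking Chebyshev's inequality, rather than using the joint distribution of the order statistics directly.

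First I would compute $\mathbb{E}[T_{(j)}] = \alpha \sum_{i=1}^{j} \frac{1}{n-i+1} = \alpha(H_n - H_{n-j})$ (which is what the excerpt uses anyway), and then the variance
\begin{equation*}
\mathrm{Var}(T_{(j)}) \;=\; \sum_{i=1}^{j}\mathrm{Var}(D_i) \;=\; \alpha^2 \sum_{i=1}^{j}\frac{1}{(n-i+1)^2} \;=\; \alpha^2 \sum_{\ell=n-j+1}^{n}\frac{1}{\ell^2}.
\end{equation*}
For $j=\Theta(n)$, comparing the tail sum with the integral $\int_{n-j}^{n} x^{-2}\,dx = \frac{1}{n-j}-\frac{1}{n} = \frac{j}{n(n-j)}$ shows that $\mathrm{Var}(T_{(j)}) = O(1/n)$, using that $\alpha = \Theta(1)$ and $n-j = \Theta(n)$.

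Next I would apply Chebyshev's inequality: for any $\epsilon>0$,
\begin{equation*}
\Prob\!\left(|T_{(j)} - \mathbb{E}[T_{(j)}]| \geq \epsilon\right) \;\leq\; \frac{\mathrm{Var}(T_{(j)})}{\epsilon^2} \;=\; O\!\left(\frac{1}{n\epsilon^2}\right).
\end{equation*}
To obtain simultaneously $\epsilon = o(1)$ and a probability of order $o(1)$, pick any sequence $\epsilon_n \to 0$ with $n\epsilon_n^2 \to \infty$, e.g.\ $\epsilon_n = n^{-1/4}$. This yields deviation at most $n^{-1/4} = o(1)$ with probability at least $1-O(n^{-1/2}) = 1-o(1)$, which is exactly what the lemma claims.

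I do not anticipate a real obstacle here: the key facts (independence of the $D_i$, exponential rates, $j=\Theta(n)$ so that the tail $\sum_{\ell=n-j+1}^{n}\ell^{-2}$ is dominated by its first term of order $1/n$) are all either given or elementary, and Chebyshev is tight enough since we only need $o(1)$ deviation. A sharper bound (e.g.\ Bernstein's inequality on the sum of exponentials) could give sub-Gaussian tails, but that would be overkill; the main subtlety is just choosing $\epsilon_n$ to decay slowly enough that $n\epsilon_n^2\to\infty$.
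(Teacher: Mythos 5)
Your proposal is correct and follows essentially the same route as the paper: decompose $T_{(j)}$ into the independent exponential differentials $D_i$, bound the variance by $\Theta(1/n)$, and apply Chebyshev. In fact your choice $\epsilon_n = n^{-1/4}$ (any $\epsilon_n \to 0$ with $n\epsilon_n^2 \to \infty$) is the right one, and is more careful than the paper's stated $\epsilon = \Theta(\log n/n)$, for which the Chebyshev bound $\mathrm{Var}/\epsilon^2 = \Theta(n/\log^2 n)$ does not actually vanish.
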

\begin{proof}
We can write 
\begin{equation}
T_{(j)}=\sum_{i=1}^{j} D_i,
\end{equation}
where $D_i\sim \exp(\alpha^{-1}(n-i+1))$ for $i=1, \dots, n$, and they are independent. Therefore, the expected value of $T_{(j)}$ can be written as 
\begin{align}
\Expc[T_{(j)}]&= \sum_{i=1}^{j} \Expc[D_i]\nonumber\\
&=  \sum_{i=1}^{j} \frac{\alpha}{n-i+1}\nonumber\\
&= \alpha (H_n - H_{n-j}) \nonumber\\
&= \alpha \log (\frac{n}{n-j}) + \mathcal{O}(\frac{1}{n}).
\end{align}
Harmonic series $H_n=1+\frac{1}{2}+\cdots+ \frac{1}{n}$ can be approximated as $H_n=\log n + \gamma + \mathcal{O}(\frac{1}{n})$ where $ \gamma \approx 0.72156649$ is the Euler-Mascheroni constant. Moreover, for $j=\Theta(n)$, we may write the variance of $T_{(j)}$ as
\begin{align}
\var[T_{(j)}]&= \sum_{i=1}^{j} \var[D_i]\nonumber\\
&=  \sum_{i=1}^{j} \frac{\alpha^2}{(n-i+1)^2}\nonumber\\
&= \alpha^2  \Theta(\frac{1}{n})\nonumber\\
&=  \Theta(\frac{1}{n}).
\end{align}
By Chebyshev's inequality, 
\begin{align}
\Prob \big(| T_{(j)} - \mathbb{E}[T_{(j)}] |\leq \epsilon \big) \geq 1-\frac{\var[T_{(j)}]}{\epsilon^2}=1-o(1),
\end{align}
for $\epsilon=\Theta(\frac{\log n}{n})$. 
\end{proof}
This lemma indicates that for large enough $j$, we can approximate $T_{(j)}$ by its expected value, i.e. with probability approaching 1, we have $T_{(j)}=\alpha \log (\frac{n}{n-j}) \pm o(1)$.

\begin{lemma}\label{lemma:expc_approx2}
For $j=o(n)$, we have $T_{(j)} = o(1)$, with probability $1-o(1)$.
\end{lemma}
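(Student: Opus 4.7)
The plan is to mimic the proof of Lemma~\ref{lemma:expc_approx} but now exploit the fact that $j=o(n)$ makes the mean itself vanish, so we do not need variance-based concentration around a nonzero limit; a one-sided Markov bound will suffice.

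First, I would write $T_{(j)}=\sum_{i=1}^{j} D_i$ with $D_i\sim\exp(\alpha^{-1}(n-i+1))$ independent, exactly as in the preceding lemma, and compute
\begin{equation}
\Expc[T_{(j)}]=\sum_{i=1}^{j}\frac{\alpha}{n-i+1}=\alpha(H_n-H_{n-j})=\alpha\log\!\Bigl(\frac{n}{n-j}\Bigr)+\mathcal{O}(1/n).
\end{equation}
The key observation is that for $j=o(n)$ we have $j/n\to 0$, hence $\log(n/(n-j))=-\log(1-j/n)=j/n+\mathcal{O}((j/n)^2)=o(1)$. Therefore $\Expc[T_{(j)}]=o(1)$, which is the crucial input that was not available in the previous regime $j=\Theta(n)$.

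Next, since $T_{(j)}$ is a nonnegative random variable, I would apply Markov's inequality: for any $\varepsilon>0$,
\begin{equation}
\Prob\bigl(T_{(j)}>\varepsilon\bigr)\le \frac{\Expc[T_{(j)}]}{\varepsilon}.
\end{equation}
Choosing $\varepsilon=\varepsilon_n:=\sqrt{\Expc[T_{(j)}]}=o(1)$ yields $\Prob(T_{(j)}>\varepsilon_n)\le \sqrt{\Expc[T_{(j)}]}=o(1)$, so $T_{(j)}\le \varepsilon_n=o(1)$ with probability $1-o(1)$, which is exactly the claim.

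There is essentially no serious obstacle here; the proof is much shorter than that of Lemma~\ref{lemma:expc_approx} because once the mean is shown to vanish, a first-moment bound alone suffices and we do not need Chebyshev. If one preferred a two-sided concentration statement, a Chebyshev argument analogous to the previous lemma would also work: the variance bound $\var[T_{(j)}]=\sum_{i=1}^{j}\alpha^2/(n-i+1)^2\le \alpha^2 j/(n-j+1)^2=o(1/n)$ for $j=o(n)$ combines with the vanishing mean to give the same conclusion. The only minor care point is ensuring $\Expc[T_{(j)}]$ is strictly positive before taking a square root, which is automatic since $j\ge 1$; for the degenerate case $j=0$ the statement is vacuous.
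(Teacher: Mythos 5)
Your proof is correct, and it differs from the paper's in one meaningful way: the concentration step. Both arguments start identically, writing $T_{(j)}=\sum_{i=1}^{j}D_i$ and computing $\Expc[T_{(j)}]=\alpha(H_n-H_{n-j})=o(1)$ for $j=o(n)$ --- this is the essential content of the lemma in either version. From there the paper proceeds as in Lemma~\ref{lemma:expc_approx}, bounding $\var[T_{(j)}]=o(1/n)$ and applying Chebyshev's inequality with $\epsilon=\Theta(1/\sqrt{n})$, which yields the two-sided statement $T_{(j)}=\Expc[T_{(j)}]\pm o(1)$. You instead observe that $T_{(j)}\geq 0$ and apply Markov's inequality with threshold $\varepsilon_n=\sqrt{\Expc[T_{(j)}]}$, getting $\Prob(T_{(j)}>\varepsilon_n)\leq\sqrt{\Expc[T_{(j)}]}=o(1)$. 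This is more elementary (no variance computation) and is sufficient here: the lemma only asserts an upper bound on $T_{(j)}$, and every later invocation of it (in the Regime~I theorem and in Case~1 of Lemma~\ref{lemma:main1}) uses only the upper bound $T_{(q)}=o(1)$, never a matching lower bound. What the paper's Chebyshev route buys is uniformity of presentation with Lemma~\ref{lemma:expc_approx} and the slightly stronger two-sided concentration it quotes at the end of its proof; what your route buys is brevity and a weaker hypothesis (a first-moment bound alone). Your side remark handling the degenerate case and the positivity of the mean is careful but not strictly necessary.
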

\begin{proof}
We can write the expected value of $T_{(j)}$ as 
\begin{align}
\Expc[T_{(j)}]&= \alpha (H_n - H_{n-j}) \nonumber\\
&= \alpha \log (\frac{n}{n-j}) + \mathcal{O}(\frac{1}{n})\nonumber\\
&=\alpha \log (1+\frac{j}{n-j}) + \mathcal{O}(\frac{1}{n})\nonumber\\
&= \alpha \frac{j}{n-j} + o(1)\nonumber\\
&= o(1).
\end{align}
Following the same argument in the proof of Lemma \ref{lemma:expc_approx}, the variance of $T_{(j)}$ for $j=o(n)$ can be evaluated as 
\begin{align}
\var[T_{(j)}]&= \sum_{i=1}^{j} \var[D_i]\nonumber\\
&= \alpha^2 \sum_{i=1}^{j} \frac{1}{(n-i+1)^2}\nonumber\\
&\leq \alpha^2  \frac{j}{(n-j+1)^2}\nonumber\\
&=  \mathcal{O}(\frac{j}{n^2})\nonumber\\
&= o(\frac{1}{n}).
\end{align}
By Chebyshev's inequality, 
\begin{align}
\Prob(| T_{(j)} - \mathbb{E}[T_{(j)}] |\leq \epsilon ) \geq 1-\frac{\var[T_{(j)}]}{\epsilon^2}=1-o(1),
\end{align}
for $\epsilon=\Theta(\frac{1}{\sqrt{n}})$. Therefore, we conclude that for $j=o(k)$, we have $T_{(j)}=\mathbb{E}[T_{(j)}] \pm o(1)=o(1)$, with probability $1-o(1)$.
\end{proof}

As discussed before, characterizing the exact execution time for arbitrary parameters seems to be intractable for arbitrary parameters. Instead, we consider three asymptotic regimes and evaluate the overall run-time in the following.
\subsection{Regime I}
In this regime, at a high level the total communication time is negligible compared to computation time. More precisely, we consider the regime where $t_{\text{1-cmm}}=o(\frac{1}{n})$. For $k=\Theta(n)$, we have  $t_{\text{cmm}}=o(\frac{1}{n})$. In this regime, the communication time for each worker node is small enough such that almost all of the results can be communicated during the computation phase. Thus, the problem gets reduced to the wired coded computation problem in \cite{lee2016speeding} as $n$ gets large. The following theorem precisely states this fact.
\begin{theorem}
With high probability, $k$ transmissions are completed by time \normalfont $t_0+T_{(k)}+t_{\text{cmm}}$.
\end{theorem}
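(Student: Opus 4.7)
The plan is to reformulate the transmission process as a FIFO single-server queue with deterministic service time $t_{\text{cmm}}$, whose arrivals are the (random) computation-completion times $t_0 + T_{(1)}, \ldots, t_0 + T_{(k)}$, and then show that this queue clears by time $t_0 + T_{(k)} + t_{\text{cmm}}$ with high probability. Writing $S_j$ for the completion time of the $j$-th transmission, the FIFO recursion $S_j = \max(S_{j-1}, t_0 + T_{(j)}) + t_{\text{cmm}}$ (with $S_0 = 0$) unrolls by induction to $S_k = \max_{1 \leq j \leq k}\{t_0 + T_{(j)} + (k-j+1)\, t_{\text{cmm}}\}$. Hence the event I want to verify, $S_k \leq t_0 + T_{(k)} + t_{\text{cmm}}$, is equivalent to the deterministic family of inequalities
\begin{equation}
T_{(k)} - T_{(j)} \geq (k-j)\, t_{\text{cmm}} \quad \text{for every } j \in \{1, \ldots, k-1\}.
\end{equation}

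Setting $m := k - j$, each inequality reads $\sum_{i=k-m+1}^{k} D_i \geq m\, t_{\text{cmm}}$. Since $D_i \sim \exp((n-i+1)/\alpha)$ and $n-i+1 \leq n$ for all $i \leq k$, each $D_i$ stochastically dominates an independent $\exp(n/\alpha)$ random variable, so the partial sum stochastically dominates an Erlang variable $Y_m$ with $m$ phases and rate $\lambda = n/\alpha$. The standard CDF estimate $\Prob(Y_m \leq x) \leq (\lambda x)^m/m!$ combined with $m! \geq (m/e)^m$ then yields, with $\beta := n\, t_{\text{cmm}}/\alpha$,
\begin{equation}
\Prob\!\left(\sum_{i=k-m+1}^{k} D_i < m\, t_{\text{cmm}}\right) \leq (e\beta)^m.
\end{equation}

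In Regime I, $t_{\text{1-cmm}} = o(1/n)$ together with $k = \Theta(n)$ gives $t_{\text{cmm}} = (r/k)\, t_{\text{1-cmm}} = o(1/n)$, so $\beta = o(1)$ and eventually $e\beta < 1$. Summing the geometric tail and applying a union bound over $m$ therefore produces
\begin{equation}
\Prob\!\left(\exists\, j \in \{1,\ldots,k-1\}: T_{(k)} - T_{(j)} < (k-j)\, t_{\text{cmm}}\right) \leq \sum_{m=1}^{k-1} (e\beta)^m \leq \frac{e\beta}{1 - e\beta} = o(1),
\end{equation}
which, combined with the deterministic reformulation above, yields the theorem.

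The step I expect to be the main obstacle is obtaining uniform control across all $m$. A naive union bound applied directly to the per-index events $\{D_i \leq t_{\text{cmm}}\}$ fails: each such term is $\Theta(n\, t_{\text{cmm}})$, and the sum over $k = \Theta(n)$ indices is of order $n^2\, t_{\text{cmm}}$, which need not vanish when $t_{\text{cmm}}$ is only $o(1/n)$. The resolution is to bound whole partial sums through the Erlang lower-tail estimate rather than each $D_i$ individually; this trades many weak per-index bounds for a single geometrically decaying family indexed by $m$, and it is precisely the exponential decay in $m$ that makes the final union bound summable.
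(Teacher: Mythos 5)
Your proposal is correct, and it takes a genuinely different route from the paper. The paper argues by contradiction on the largest random index $Q$ at which the channel is idle: using the concentration of the order statistics (Lemmas \ref{lemma:expc_approx} and \ref{lemma:expc_approx2}), it rules out $Q=o(n)$ and $Q=\Theta(n)<k$ and concludes $Q=k$ with high probability, so that every transmission starts as soon as its computation finishes. You instead write the exact Lindley recursion for the FIFO channel, which reduces the claim to the deterministic family $T_{(k)}-T_{(j)}\geq (k-j)t_{\text{cmm}}$, i.e.\ to lower bounds on the suffix sums $\sum_{i=k-m+1}^{k}D_i$; stochastic domination of each $D_i$ by an $\exp(n/\alpha)$ variable (valid since its rate $(n-i+1)/\alpha\leq n/\alpha$), the Erlang lower-tail bound $\Prob(Y_m\leq x)\leq(\lambda x)^m/m!$, and a geometric union bound in $m$ then finish the argument. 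Both establish the same conclusion, and your observation that the naive per-spacing union bound fails (its total is $\Theta(n^2 t_{\text{cmm}})$, which need not vanish) correctly identifies why the partial-sum estimate is the right tool. What each approach buys: the paper's argument reuses the concentration lemmas needed for Regimes II and III and exposes the ``last idle index'' structure that organizes the whole analysis, but its case split on the asymptotic order of the random index $Q$ is somewhat informal. Your argument is self-contained, non-asymptotic, and quantitative — it yields an explicit $O(n\,t_{\text{cmm}})$ failure probability valid for every finite $n$ — at the cost of not generalizing as directly to the other regimes, where the queue genuinely builds up and the analogous inequalities fail.
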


\begin{proof}

Let $Q\leq k$ denote the greatest random index for which the channel is idle at time $t_0 + T_{(Q)}$. First, assume that $Q=o (n)$ with probability $1 - o(1)$. Given the realization $Q = q = o(n)$,  by Lemma \ref{lemma:expc_approx2}, with high probability,
\begin{equation}
T_{(q)} = o(1).
\end{equation}
Therefore, $T_{(k)}=T_{(q)}+(k-q)t_{\text{cmm}}=o(1)$, which is in contradiction to Lemma \ref{lemma:expc_approx}, since $T_{(k)}=\Theta(1)$. Secondly, assume that $Q = q = \Theta(n)<k$ with probability $1 - o(1)$. From Lemma \ref{lemma:expc_approx} we can write 
\begin{align}
T_{(k)}-T_{(q)} &\geq \alpha \log (\frac{n}{n-k}) - \alpha \log (\frac{n}{n-q}) - o(1) \nonumber\\
&=\alpha \log (\frac{n-q}{n-k})  -o(1)\nonumber\\
&= \Theta(1).\nonumber
\end{align}
On the other hand, since the channel is not idle in the interval $[t_0+T_{(q)},t_0+T_{(k)}]$ one obtains 
$$
T_{(k)}-T_{(q)} = (k-q)t_{\text{cmm}} = o(1)
$$ 
that is a contradiction. Thus, $Q = q = k$ with probability $1 - o(1)$. This implies that all the $k-1$ computations are transmitted by time  $t_0+T_{(k)}$ with high probability. The last transmission would occur right after the corresponding computation is finished. Therefore, the lower bound $t_0+T_{(k)}+t_{\text{cmm}}$ in \eqref{eq:bounds} is achieved and $T^{\text{R}_{\text{I}}}_{\text{coded}}=t_0+T_{(k)}+t_{\text{cmm}}$ with probability $1-o(1)$, where $T^{\text{R}_{\text{I}}}_{\text{coded}}$ denotes the run-time corresponding to the $(n,k)-$MDS coded scheme performing in regime I. 
\end{proof}

\subsection{Regime II}
In this regime, computation time is negligible compared to communication time, i.e. $t_{\text{1-cmm}}=\omega(\frac{1}{n})$ which implies $t_{\text{cmm}}=\omega(\frac{1}{n})$ for $k=\Theta(n)$.  

The following theorem states that the computation times are small enough such that most of the communications have to occur after the last computation is finished.
\begin{theorem}
With probability $1-o(1)$, at most $o(n)$ transmissions are completed by time $t_0 +T_{(k)}$.
\end{theorem}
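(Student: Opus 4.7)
The plan is to combine the concentration estimate of Lemma~\ref{lemma:expc_approx} with a purely deterministic ``channel throughput'' bound coming from the MAC-layer constraint that only one worker transmits at a time. The central observation is that, irrespective of the realized computation times, at most $\lfloor L/t_{\text{cmm}}\rfloor + 1$ transmissions can complete in any time interval of length $L$, because the channel clears at most one packet every $t_{\text{cmm}}$ seconds.

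First, I would apply Lemma~\ref{lemma:expc_approx} with $j=k=\Theta(n)$ (so $n-k=\Theta(n)$) to get, with probability $1-o(1)$,
\begin{equation*}
T_{(k)} \;=\; \alpha\,\log\!\left(\tfrac{n}{n-k}\right) + o(1) \;=\; \Theta(1).
\end{equation*}
Denote the ``good'' event on which this holds by $\mathcal{E}$; it satisfies $\Prob(\mathcal{E})=1-o(1)$.

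Next, let $N$ be the (random) number of transmissions completed by time $t_0+T_{(k)}$. Since all computations begin at time $t_0$, no transmission can finish before $t_0+T_{(1)}+t_{\text{cmm}}\geq t_0$, and the serialization argument above then gives the deterministic bound $N \leq \bigl\lfloor T_{(k)}/t_{\text{cmm}} \bigr\rfloor + 1$. Under regime II we have $t_{\text{cmm}}=\omega(1/n)$, so on the event $\mathcal{E}$,
\begin{equation*}
N \;\leq\; \frac{T_{(k)}}{t_{\text{cmm}}} + 1 \;=\; \frac{\Theta(1)}{\omega(1/n)} + 1 \;=\; o(n),
\end{equation*}
which is exactly the claim of the theorem.

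The main (and essentially only) technical obstacle is controlling $T_{(k)}$: we need it to be $O(1)$ with high probability so that the $\Theta(1)$ in the numerator above is legitimate. That control is precisely what Lemma~\ref{lemma:expc_approx} already provides, so once that is in hand the proof reduces to a one-line accounting of the maximum throughput of a single-server queue and a union bound with $\mathcal{E}$.
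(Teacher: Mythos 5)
Your proof is correct, but it takes a genuinely different and in fact more economical route than the paper's. The paper introduces the random index $Q$ (the greatest index at which the channel is idle), rules out $Q=\Theta(n)$ by contradiction (if the channel is idle at $t_0+T_{(Q)}$ then all $Q-1$ earlier results have been transmitted, forcing $T_{(Q)}\geq (Q-1)t_{\text{cmm}}=\omega(1)$, which contradicts the concentration $T_{(Q)}=\Theta(1)$), and only then counts the completed transmissions as $q-1$ plus the throughput of the ensuing busy period, $\left(T_{(k)}-T_{(q)}\right)/t_{\text{cmm}}$. You bypass the idle-index bookkeeping entirely: since the channel is a non-preemptive single server, successive completion epochs are separated by at least $t_{\text{cmm}}$, so the number of completions in the window $[t_0,\,t_0+T_{(k)}]$ is deterministically at most $T_{(k)}/t_{\text{cmm}}+1$, and the only probabilistic input needed is $T_{(k)}=\Theta(1)$ with high probability from Lemma~\ref{lemma:expc_approx} (which, as you correctly note, requires $n-k=\Theta(n)$). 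Your throughput bound actually dominates the paper's count term by term, so nothing is lost; what the paper's longer argument buys is structural information about where the last idle instant occurs, which parallels the $Q$-based analyses it runs in Regimes I and III, but that information is not needed to establish this particular theorem.
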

\begin{proof}
Let $Q\leq k$ denote the random variable denoting the greatest index for which the channel is idle at time $t_0 + T_{(q)}$. First, assume that $Q=\Theta (n)$ with probability $1 - o(1)$. Given the realization $Q = q = \Theta(n)$,  by Lemma \ref{lemma:expc_approx}, with high probability,
\begin{equation}\label{eq:16}
T_{(q)} = \alpha \log (\frac{n}{n-q}) \pm o(1) = \Theta(1).
\end{equation}
On the other hand, all the $q-1$ communications are finished by time $T_{(q)}$. Therefore, 
\begin{equation}
T_{(q)} \geq (q-1) t_{\text{cmm}}= \omega(1),
\end{equation}
which is in contradiction to (\ref{eq:16}). Hence, $Q=o(n)$ with probability $1 - o(1)$, and given that $Q = q = o(n)$ the number of transmissions completed by time $t_0+T_{(k)}$ can be written as 
\begin{align}
\text{\# completed transmissions by } t_0+T_{(k)}&=q-1+\frac{T_{(k)}-T_{(q)}}{t_{\text{cmm}}}\\ &\leq q+\frac{T_{(k)}}{t_{\text{cmm}}} \nonumber\\
&= o(n) + \frac{\Theta(1)}{\omega(\frac{1}{n})}\nonumber\\
&= o(n).
\end{align}
Therefore, $T^{\text{R}_{\text{II}}}_{\text{coded}}\geq t_0+T_{(k)}+(k-o(n))t_{\text{cmm}}$ with probability $1-o(1)$, where $T^{\text{R}_{\text{II}}}_{\text{coded}}$ denotes the run-time corresponding to the $(n,k)-$MDS coded scheme performing in regime II.
\end{proof}

\subsection{Regime III}
The third regime is the regime of interest where communication and computation times are comparable, i.e. $t_{\text{1-cmm}}=\Theta(\frac{1}{n})$. Without loss of generality, we assume that $t_{\text{1-cmm}}=\frac{k}{rn}$ and therefore $t_{\text{cmm}}=\frac{1}{n}$. 

We now define a new variable $p$ as follows. Let $p$ be the smallest integer such that
\begin{equation}
    \sum_{i=1}^{p} \mathbb{E}[D_i] \geq (p-1)t_{\text{cmm}}, 
\end{equation}
that is
\begin{equation}\label{eq:pdef}
    \sum_{i=1}^{p} \frac{\alpha}{n-i+1} \geq \frac{p-1}{n}. 
\end{equation}
Thus, for $j<p$, we have 
\begin{equation}
    \sum_{i=1}^{j} \frac{\alpha}{n-i+1} < \frac{j-1}{n},
\end{equation}
and for $j>p$, 
\begin{equation}
    \sum_{i=1}^{j} \frac{\alpha}{n-i+1} > \frac{j-1}{n}. 
\end{equation}
Moreover, 
\begin{equation}\label{eq:48}
    \sum_{i=1}^{p} \frac{\alpha}{n-i+1} < \frac{p-1}{n}+o(1). 
\end{equation}

From definition of the index $p$, it is easy to check that $p=\Theta(n)$.

To illustrate this definition, let us first discuss a naive approximation of the differential times as follows. We approximate the differential times with deterministic variables equal to their expected values, i.e.
\begin{equation}
    \widetilde{D}_i = \Expc[D_i]=\frac{\alpha}{n-i+1}.
\end{equation}
This assumption implies that the inter-arrival times are increasing and therefore the first $p-1$ computations are pipe-lined or the channel does not become idle in the first $p-1$ transmissions (see figure \ref{fig:timing2}), i.e. $\sum_{i=1}^{p} \widetilde{D}_i \geq (p-1)t_{\text{cmm}}$. Thus, the constant $p$ approximates the first index for which the channel is idle when the $p$-th computation is performed; thus, the communication can be initiated right after the computation result is ready for transmission. 

Now we get back to the random computation times and first evaluate the number of transmissions communicated by time $t_0+T_{(p)}$.

\begin{figure}[h]
\begin{center}
\includegraphics[width=13cm]{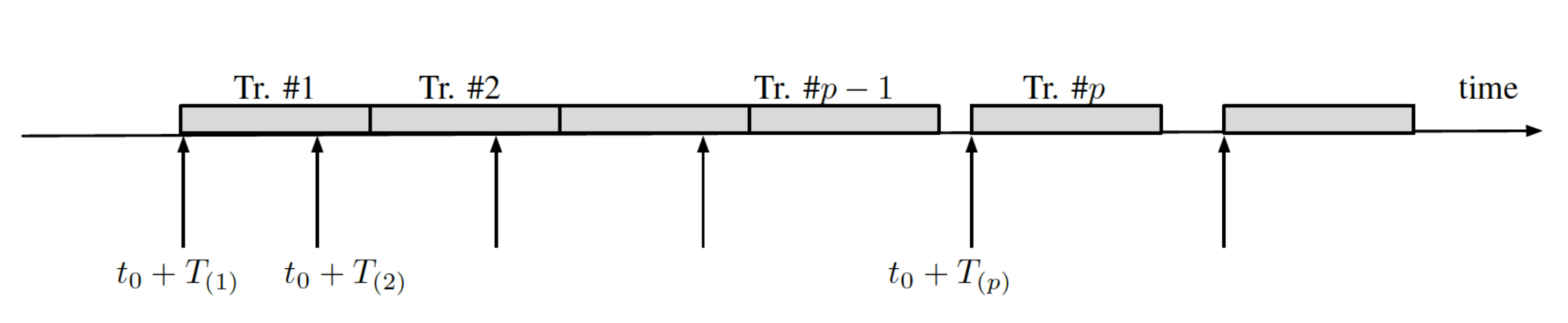}
    \caption{\small{\textbf{Computation and communication timing for approximate computation times.} Using the approximation that the computation times are close to their expected values, we obtain $p$ which approximates the first index for which the channel is idle when the $p$-th computation is performed. }}
    \label{fig:timing2}
    \end{center}
\end{figure}

\begin{lemma}\label{lemma:main1}
With high probability, at least $p-o(n)$ transmissions are completed by time $t_0+T_{(p)}$.
\end{lemma}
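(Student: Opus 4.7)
The plan is to mirror the structure of the preceding theorems in this section: introduce a random pivot index marking the last idle instant of the channel before $t_0+T_{(p)}$, split into cases on its order, and invoke Lemmas~\ref{lemma:expc_approx} and \ref{lemma:expc_approx2}. Concretely, let $Q\le p$ denote the largest index such that the channel is idle at time $t_0+T_{(Q)}$, i.e.\ upon the arrival of the $Q$-th computation. By construction, exactly $Q-1$ transmissions have completed by $t_0+T_{(Q)}$, and by maximality of $Q$ the channel stays continuously busy throughout $[t_0+T_{(Q)},\,t_0+T_{(p)}]$; since each transmission takes $t_{\text{cmm}}=1/n$, back-to-back transmissions accumulate in this busy interval. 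Hence the number of transmissions completed by $t_0+T_{(p)}$ is at least
\begin{equation*}
Q - 1 + \lfloor n(T_{(p)} - T_{(Q)}) \rfloor,
\end{equation*}
and the goal reduces to lower-bounding this quantity by $p - o(n)$ with high probability.

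Next, I would split on whether $Q=\Theta(n)$ or $Q=o(n)$. In the first case (with $Q\le p$), Lemma~\ref{lemma:expc_approx} gives $T_{(p)}-T_{(Q)} \ge \mathbb{E}[T_{(p)}]-\mathbb{E}[T_{(Q)}]-o(1)$ with high probability; combined with $\mathbb{E}[T_{(p)}]\ge (p-1)/n$ and $\mathbb{E}[T_{(Q)}]<(Q-1)/n$, both coming directly from the definition of $p$ and its consequence for $j<p$, this yields $n(T_{(p)}-T_{(Q)}) \ge (p-Q) - o(n)$, so the number of completions is at least $p - 1 - o(n) = p - o(n)$. In the second case $Q=o(n)$, monotonicity of the order statistics dominates $T_{(Q)}$ by $T_{(q_0)}$ for any deterministic threshold $q_0\ge Q$ with $q_0=o(n)$, and Lemma~\ref{lemma:expc_approx2} forces $T_{(q_0)}=o(1)$ with high probability; combined with $T_{(p)} = (p-1)/n \pm o(1)$ from Lemma~\ref{lemma:expc_approx} together with \eqref{eq:48}, this gives $n(T_{(p)}-T_{(Q)}) \ge (p-1) - o(n)$, and the total is again at least $p - o(n)$.

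The main technical obstacle is quantitative: the $o(1)$ concentration slacks in Lemmas~\ref{lemma:expc_approx} and \ref{lemma:expc_approx2} must remain $o(n)$ after multiplication by $n$. This can be arranged by tightening the Chebyshev threshold in those proofs, for example taking $\epsilon = n^{-1/3}$, which still delivers probability $1-o(1)$ since the relevant variances are $O(1/n)$ and $o(1/n)$, while $n\epsilon = n^{2/3} = o(n)$. A secondary subtlety is that $Q$ is random, so the concentration event for $T_{(Q)}$ should be obtained by a union bound over the $\Theta(n)$ candidate deterministic indices rather than applied pointwise; this is benign provided the per-index failure probability is reduced to $o(1/n)$, which only costs a further polylogarithmic factor in the Chebyshev threshold.
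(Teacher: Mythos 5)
Your proposal is correct and follows essentially the same route as the paper's proof: the same pivot index $Q$ (the last idle instant before $t_0+T_{(p)}$), the same case split on $Q=o(n)$ versus $Q=\Theta(n)$, and the same use of Lemmas~\ref{lemma:expc_approx} and \ref{lemma:expc_approx2} together with the defining inequality of $p$. Your added care about keeping the concentration slack at $o(n)$ after multiplying by $n$, and about union-bounding over the random index $Q$, tightens details the paper leaves implicit but does not change the argument.
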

\begin{proof}
We define $Q$ as the largest index less than $p$ for which the channel is busy from $t_0+T_{(Q)}$ to $t_0+T_{(p)}$, and $q$ as its realization. We consider the following two possible regimes for $Q$.
\begin{itemize}
\item Case 1: $Q = q=o(n)$.
Given the assumption $q=o(n)$ and using Lemma \ref{lemma:expc_approx2}, we can write $T_{(q)}=\mathbb{E}[T_{(q)}] \pm o(1)=o(1)$, with probability $1-o(1)$. Thus, using Lemma \ref{lemma:expc_approx}, the number of completed transmissions up to time $t_0+T_{(p)}$ is 
\begin{align}
\text{\# completed transmissions by } t_0+T_{(p)}&= q-1+n(T_{(p)}-T_{(q)})\nonumber\\
&\geq nT_{(p)} - o(n)\nonumber\\
&\geq n \alpha (H_n - H_{n-p}) - o(n)-o(1)\nonumber\\
& = n\alpha (H_{n-1} - H_{n-p-1}) - o(n)\nonumber\\
& \geq p-o(n).
\end{align}
\item Case 2: $Q = q=\Theta(n)$.
Using Lemma \ref{lemma:expc_approx}, we have $T_{(q)}=\alpha \log (\frac{n}{n-q}) \pm o(1)$ and $T_{(p)}=\alpha \log (\frac{n}{n-p}) \pm o(1)$ with probability $1-o(1)$. Therefore, with probability $1-o(1)$, 
\begin{align}\label{eq:pq}
T_{(p)}-T_{(q)} &\geq \alpha \log (\frac{n}{n-p}) - \alpha \log (\frac{n}{n-q}) - o(1) \nonumber\\ 
&\geq \frac{p-q}{n} -o(1),
\end{align}
where \eqref{eq:pq} is due to the definition of $p$ in \eqref{eq:pdef} and the fact that $q \leq p$.
Therefore, by Lemma \ref{lemma:expc_approx}, $T_{(p)}$ and $T_{(q)}$ are concentrated around their expected values, and the number of completed transmissions up to time $t_0+T_{(p)}$ is 
\begin{align}
\text{\# completed transmissions by } t_0+T_{(p)}&= q-1+n(T_{(p)}-T_{(q)})\nonumber\\
&\geq q-1 + p-q- n\cdot o(1)\nonumber\\
&=p-o(n).
\end{align}
 Therefore, $p-o(n)$ transmissions are completed by time $t_0+T_{(p)}$ with probability $1-o(1)$.
 \end{itemize}
\end{proof}

\begin{lemma}\label{lemma:main2}
With high probability, at least $k-p-o(n)$ transmissions are completed from $t_0+T_{(p)}$ to $t_0+T_{(k)}$.
\end{lemma}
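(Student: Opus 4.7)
The plan is to mirror the case analysis of Lemma \ref{lemma:main1}. Let $N(t)$ denote the number of transmissions completed by time $t$. Lemma \ref{lemma:main1} already gives $N(t_0+T_{(p)})\geq p-o(n)$ with probability $1-o(1)$, so it suffices to prove $N(t_0+T_{(k)})\geq k-o(n)$ with probability $1-o(1)$, whence the difference is at least $k-p-o(n)$. I'd introduce $Q'$ as the greatest index in $\{1,\dots,k\}$ such that the channel is continuously busy throughout $[t_0+T_{(Q')},\,t_0+T_{(k)}]$, with the convention $Q'=0$ if the channel has been busy since $t_0$. By construction all $Q'-1$ earlier transmissions have completed by time $t_0+T_{(Q')}$, and during the busy interval the channel transmits at rate $n$, yielding
\[
N(t_0+T_{(k)})\;\geq\;(Q'-1)+n\bigl(T_{(k)}-T_{(Q')}\bigr).
\]

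Conditioning on a realization $Q'=q'$, I'd split into the two regimes mirroring Lemma \ref{lemma:main1}. When $q'\leq p$: by \eqref{eq:48} together with Lemmas \ref{lemma:expc_approx} and \ref{lemma:expc_approx2}, we have $T_{(q')}\leq (q'-1)/n+o(1)$ with probability $1-o(1)$; and by the defining property of $p$ applied to $j=k$, $T_{(k)}\geq (k-1)/n-o(1)$ with probability $1-o(1)$ via Lemma \ref{lemma:expc_approx}. Substituting, $n(T_{(k)}-T_{(q')})\geq (k-q')-o(n)$, hence $N(t_0+T_{(k)})\geq (q'-1)+(k-q')-o(n)=k-o(n)$. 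When $q'>p$: the clean upper bound on $T_{(q')}$ from \eqref{eq:48} is no longer available, but for $j>p$ each term $\alpha/(n-j+1)$ already exceeds $1/n$ (an elementary consequence of $p>n(1-\alpha)$, which itself follows from the fact that $x\mapsto -\alpha\log(1-x)-x$ is minimized at $x=1-\alpha$), so $\sum_{i=q'+1}^{k}\alpha/(n-i+1)\geq (k-q')/n$. Concentration of $T_{(k)}-T_{(q')}$ about its mean via Lemma \ref{lemma:expc_approx} then yields $T_{(k)}-T_{(q')}\geq (k-q')/n-o(1)$ w.h.p., so again $N(t_0+T_{(k)})\geq k-o(n)$.

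The main obstacle is the second case ($q'>p$): since $Q'$ is itself a random index, a pointwise Chebyshev tail at a fixed $q'$ must be upgraded to a uniform statement over the $\Theta(n)$ candidate values. The cleanest path is to first establish a preliminary uniform concentration claim, namely $T_{(j)}=\mathbb{E}[T_{(j)}]\pm o(1)$ simultaneously for all $j=\Theta(n)$ with probability $1-o(1)$, via a sharper (Chernoff-type) exponential tail on $T_{(j)}=\sum_{i=1}^{j}D_i$ combined with a union bound, and then to derive both cases from the deterministic inequalities between the partial sums $\sum\alpha/(n-i+1)$ guaranteed by the defining property of $p$.
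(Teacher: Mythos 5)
Your argument is correct and is essentially the paper's own: both proofs locate the last index at which the channel is idle, use that the channel then transmits continuously at rate $n=1/t_{\text{cmm}}$ up to $t_0+T_{(k)}$, and combine the defining inequalities of $p$ (on both sides of $p$) with the concentration Lemmas \ref{lemma:expc_approx} and \ref{lemma:expc_approx2} --- the paper merely packages this as a contradiction with three cases on the location of the last idle index, whereas you argue directly with two cases and thereby also subsume Lemma \ref{lemma:main1}. Your closing remark that the pointwise concentration must be made uniform over the $\Theta(n)$ possible realizations of the random index is a real gap present in the paper's proof as well, and the Chernoff-plus-union-bound repair you sketch is the right way to close it.
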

\begin{proof}
First, we note that by Lemma \ref{lemma:expc_approx}, $T_{(p)}$ and $T_{(k)}$ are well concentrated around their expected values. Thus, together with (\ref{eq:48}), 
\begin{align}
T_{(k)}-T_{(p)} &\geq \alpha \log (\frac{n}{n-k}) - \alpha \log (\frac{n}{n-p}) - o(1) \nonumber\\
&\geq \frac{k-p}{n} -o(1),
\end{align}
with probability $1-o(1)$.

Now in contrary, suppose that the number of communications from $t_0+T_{(p)}$ to $t_0+T_{(k)}$ is smaller than $\beta(k-p)$ for some constant $\beta < 1$ with a positive $\Theta(1)$ probability. Now let $T_{(j)}$ be the last time that the channel is idle in this period. Consider the following three cases:
\begin{itemize}
    \item Case 1: If $j = p$, then we reach a contradiction since $n(T_{(k)} - T_{(p)}) \geq  k-p - o(n)$ with high probability.
\item Case 2: If $p < j < \gamma k$ for some $\gamma <1$, then $n(T_{(k)} - T_{(j)}) \geq  k-j-   o(n)$ with probability $1 - o(1)$ and also $j-p$ communications have been done by time $t_0+T_{(j)}$ since the channel was idle, which implies a contradiction.
\item Case 3: If $j = k - o(n)$, since the channel was idle at that time, at least $j-p=k - p-o(n)$ communications have already been completed by $t_0+T_{(k)}$ that is a contradiction.
\end{itemize}
Therefore, the claim is concluded.

\end{proof}

Putting Lemmas \ref{lemma:main1} and \ref{lemma:main2} together, we conclude that at least $k-o(n)$ transmissions are completed by time $t_0+T_{(k)}$ with probability approaching 1. That is, $T^{\text{R}_{\text{III}}}_{\text{coded}}\leq t_0+T_{(k)}+o(n)t_{\text{cmm}}$ with probability $1-o(1)$, where $T^{\text{R}_{\text{III}}}_{\text{coded}}$ denotes the run-time corresponding to the $(n,k)-$MDS coded scheme performing in regime III.

\begin{theorem}\label{theorem:main}
The expected total run-time for an $(n,k)-$MDS coded scheme performing in Regime III is 
\begin{equation}
    \Expc[T^{\text{R}_{\text{III}}}_{\text{coded}}]=t_0+\alpha (H_n - H_{n-k})+o(1).
\end{equation}
\end{theorem}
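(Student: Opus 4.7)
The plan is to lift the high-probability results established in Lemmas \ref{lemma:main1} and \ref{lemma:main2} into a statement about the expectation. First I would combine those two lemmas via a union bound: at least $p-o(n)$ transmissions complete by time $t_0+T_{(p)}$, and at least $k-p-o(n)$ further transmissions complete between $t_0+T_{(p)}$ and $t_0+T_{(k)}$, so at least $k-o(n)$ transmissions are done by time $t_0+T_{(k)}$ with probability $1-o(1)$. Each of the remaining $o(n)$ communications takes $t_{\text{cmm}}=1/n$, giving
\begin{equation*}
T^{\text{R}_{\text{III}}}_{\text{coded}} \leq t_0 + T_{(k)} + o(n)\cdot t_{\text{cmm}} = t_0 + T_{(k)} + o(1),
\end{equation*}
which, together with the trivial lower bound in \eqref{eq:bounds}, pins down $T^{\text{R}_{\text{III}}}_{\text{coded}} = t_0 + T_{(k)} + o(1)$ on this good event.

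Second, I would apply Lemma \ref{lemma:expc_approx} with $j=k=\Theta(n)$ to replace $T_{(k)}$ by its mean $\alpha(H_n-H_{n-k})$ up to an additive $o(1)$ term, on another event of probability $1-o(1)$. Intersecting the two events via a union bound yields an event $E$ with $\Prob[E]=1-o(1)$ on which
\begin{equation*}
T^{\text{R}_{\text{III}}}_{\text{coded}} = t_0 + \alpha(H_n - H_{n-k}) + o(1).
\end{equation*}

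Third, I would pass from this high-probability statement to the expectation via the decomposition $\Expc[T^{\text{R}_{\text{III}}}_{\text{coded}}] = \Expc[T^{\text{R}_{\text{III}}}_{\text{coded}} \mathbf{1}_{E}] + \Expc[T^{\text{R}_{\text{III}}}_{\text{coded}} \mathbf{1}_{E^c}]$. The first term equals $(t_0+\alpha(H_n-H_{n-k})+o(1))(1-o(1))$, which for $k=\Theta(n)$ with $n-k=\Theta(n)$ (so that $\alpha(H_n-H_{n-k})=O(1)$) reduces to $t_0+\alpha(H_n-H_{n-k})+o(1)$. The second term is controlled using the deterministic upper bound $T^{\text{R}_{\text{III}}}_{\text{coded}} \leq t_0 + T_{(k)} + k t_{\text{cmm}}$ from \eqref{eq:bounds}; since $k t_{\text{cmm}} = O(1)$ and $t_0=O(1)$, only $\Expc[T_{(k)}\mathbf{1}_{E^c}]$ needs care, and by Cauchy--Schwarz this is at most $\sqrt{\Expc[T_{(k)}^2]}\sqrt{\Prob[E^c]}$. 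Using $\Expc[T_{(k)}]=O(1)$ and $\var[T_{(k)}]=O(1/n)$ (from the proof of Lemma \ref{lemma:expc_approx}) gives $\Expc[T_{(k)}^2]=O(1)$, so this residual contribution is $o(1)$.

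The routine parts are the union-bound consolidation of Lemmas \ref{lemma:main1}, \ref{lemma:main2}, and \ref{lemma:expc_approx}, and the deterministic manipulation $o(n)\cdot t_{\text{cmm}}=o(1)$ that converts the ``$o(n)$ missing transmissions'' bound into an $o(1)$ latency penalty. The main obstacle is the final step: ensuring the $o(1)$ bound on $\Prob[E^c]$ inherited from the preceding lemmas is quantitatively strong enough (Chebyshev gives polynomial decay, which is ample) to absorb the worst-case $T_{(k)}$ on the exceptional event without inflating the $o(1)$ error term. This is where the regime assumption $k=\Theta(n)$ with $n-k=\Theta(n)$ is implicitly used, since it keeps $T_{(k)}$ bounded in expectation and second moment.
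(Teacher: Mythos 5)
Your proposal is correct and follows essentially the same route as the paper: combine Lemmas \ref{lemma:main1} and \ref{lemma:main2} to get $T^{\text{R}_{\text{III}}}_{\text{coded}}\leq t_0+T_{(k)}+o(1)$ with probability $1-o(1)$, match it against the lower bound in \eqref{eq:bounds}, and absorb the exceptional event using the deterministic upper bound $t_0+T_{(k)}+kt_{\text{cmm}}$. Your Cauchy--Schwarz control of $\Expc[T_{(k)}\mathbf{1}_{E^c}]$ is in fact a cleaner, more rigorous version of the paper's conditional-expectation step, which informally leaves the random variable $T_{(k)}$ inside the bound.
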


\begin{proof}
From the lower bound in (\ref{eq:bounds}), $\Expc[T^{\text{R}_{\text{III}}}_{\text{coded}}] \geq t_0+\alpha (H_n - H_{n-k}) + o(1)$. Moreover, we proved that $T^{\text{R}_{\text{III}}}_{\text{coded}}\leq t_0+T_{(k)}+o(n)t_{\text{cmm}}$ with probability $1-o(1)$. Further, we have the upper bound $T^{\text{R}_{\text{III}}}_{\text{coded}}\leq t_0+T_{(k)}+kt_{\text{cmm}}$ in (\ref{eq:bounds}). Thus we can write 
\begin{align}
\Expc[T^{\text{R}_{\text{III}}}_{\text{coded}}] =& \Prob(T^{\text{R}_{\text{III}}}_{\text{coded}}\leq t_0+T_{(k)}+o(n)t_{\text{cmm}}) \Expc[T^{\text{R}_{\text{III}}}_{\text{coded}}|T^{\text{R}_{\text{III}}}_{\text{coded}}\leq t_0+T_{(k)}+o(n)t_{\text{cmm}}] \nonumber\\
&+ \Prob(T^{\text{R}_{\text{III}}}_{\text{coded}} > t_0+T_{(k)}+o(n)t_{\text{cmm}}) \Expc[T^{\text{R}_{\text{III}}}_{\text{coded}}|T^{\text{R}_{\text{III}}}_{\text{coded}}> t_0+T_{(k)}+o(n)t_{\text{cmm}}]\nonumber\\
\leq & (1-o(1))(t_0+T_{(k)}+o(n)t_{\text{cmm}})+o(1)(t_0+T_{(k)}+kt_{\text{cmm}})\nonumber\\
=&t_0+\alpha (H_n - H_{n-k})+o(1).
\end{align}
Therefore, $\Expc[T^{\text{R}_{\text{III}}}_{\text{coded}}]=t_0+\alpha (H_n - H_{n-k})+o(1)$.
\end{proof}

According to Theorem \ref{theorem:main}, the lower bound in (\ref{eq:bounds}) is achieved within a $o(1)$ additive factor for regime III. Note that one can optimize the rate of the code and find the best $k$ to minimize the expected total run-time numerically similar to \cite{lee2016speeding}. Moreover, note that for $k=\Theta(n)$, we have 
\begin{equation}
    \Expc[T^{\text{R}_{\text{III}}}_{\text{coded}}]=\Theta(1).
\end{equation}

In the next section, we will evaluate the total run-time for the uncoded scheme and compare it with the one corresponding to the coded scenario.

\section{Uncoded Computation over Wireless Networks}

In the uncoded scheme, the workload (total $r$ inner products) is evenly distributed among $n$ workers. Thus, each node computes $r/n$ inner products. The master node has to wait for all the workers to finish their computations and send the results back to the master node. The following theorem compares the total run-time corresponding to coded and uncoded schemes and demonstrates how much gain one achieves by employing the proper coding strategy.
\begin{theorem}
In Regimes I and III, coded computation is $\Theta(\log n)$ times faster than uncoded computation, i.e. 
\begin{equation}
    \frac{\Expc[T_{\text{uncoded}}]}{\Expc[T^{\text{R}_{\text{I}}}_{\text{coded}}]}=\Theta(\log n), \qquad \frac{\Expc[T_{\text{uncoded}}]}{\Expc[T^{\text{R}_{\text{III}}}_{\text{coded}}]}=\Theta(\log n).
\end{equation}
\end{theorem}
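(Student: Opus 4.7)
The plan is to establish that $\Expc[T_{\text{uncoded}}] = \Theta(\log n)$ in both regimes, and then divide by the coded run-times (which are already known to be $\Theta(1)$) to read off the $\Theta(\log n)$ ratio. The proof therefore reduces almost entirely to analyzing the uncoded scheme.

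First I would model uncoded computation: each of the $n$ workers is assigned $r/n$ inner products, so its computation latency is shifted exponential with constant shift $t_0^{u}=ar/n$ and rate $\mu n/r$, and since $r=\Theta(n)$ both the shift and the inverse rate are $\Theta(1)$. Because the master must wait for every worker (no redundancy), the computation bottleneck is the $n$-th order statistic, whose expectation is $\frac{r}{\mu n}H_n=\Theta(\log n)$. This already gives the lower bound
\begin{equation}
\Expc[T_{\text{uncoded}}]\geq t_0^{u}+\Expc[T_{(n)}^{u}]=\Theta(\log n),
\end{equation}
valid in both regimes.

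For the matching upper bound I would use the crude estimate $T_{\text{uncoded}}\leq t_0^{u}+T_{(n)}^{u}+n\,t_{\text{cmm}}^{u}$ obtained by performing all $n$ communications sequentially after the last computation completes, where $t_{\text{cmm}}^{u}=(r/n)\,t_{\text{1-cmm}}$ is the per-worker communication time. In Regime I, $t_{\text{1-cmm}}=o(1/n)$ gives $n\cdot t_{\text{cmm}}^{u}=o(1)$; in Regime III, $t_{\text{1-cmm}}=\Theta(1/n)$ gives $n\cdot t_{\text{cmm}}^{u}=\Theta(1)$. In either case the communication overhead is dominated by the $\Theta(\log n)$ computation term, and hence $\Expc[T_{\text{uncoded}}]=\Theta(\log n)$ in both regimes.

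For the denominators I would plug in the earlier characterizations. With $k=\Theta(n)$, we have $t_0=ar/k=\Theta(1)$, $\alpha=r/(\mu k)=\Theta(1)$, and $H_n-H_{n-k}=\log(n/(n-k))+O(1/n)=\Theta(1)$. Combining with the Regime I conclusion $T^{\text{R}_{\text{I}}}_{\text{coded}}=t_0+T_{(k)}+t_{\text{cmm}}$ (w.h.p.) and with Theorem \ref{theorem:main} for Regime III then yields $\Expc[T^{\text{R}_{\text{I}}}_{\text{coded}}]=\Theta(1)$ and $\Expc[T^{\text{R}_{\text{III}}}_{\text{coded}}]=\Theta(1)$. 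Taking the two ratios finishes the proof. There is no single hard step here; the main thing to be careful about is that in Regime III the per-link time $\Theta(1/n)$ is multiplied by the $n$ sequential transmissions, producing a $\Theta(1)$ additive overhead that must be checked to be harmless next to the dominating $\Theta(\log n)$ computation bottleneck.
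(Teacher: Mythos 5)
Your proposal is correct and follows essentially the same route as the paper: bound $\Expc[T_{\text{uncoded}}]$ between the $n$-th order statistic alone and the $n$-th order statistic plus all $n$ communications done sequentially, conclude it is $\Theta(H_n)=\Theta(\log n)$, and divide by the $\Theta(1)$ coded run-times from the Regime I theorem and Theorem \ref{theorem:main}. If anything, you are slightly more careful than the paper in using the uncoded per-worker rate $\mu n/r$ and communication time $(r/n)t_{\text{1-cmm}}$ explicitly, but this does not change the argument.
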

\begin{proof}

In the uncoded scenario, the master node needs to wait for the result of all the $n$ computations. Therefore, with probability 1,
\begin{equation}
    \frac{ar}{n}+T_{(n)}+\frac{1}{n} \leq T_{\text{uncoded}} \leq \frac{ar}{n}+T_{(n)}+1,
\end{equation}
which implies 
\begin{equation}
    \frac{ar}{n}+\Expc[T_{(n)}]+\frac{1}{n} \leq \Expc[T_{\text{uncoded}}] \leq \frac{ar}{n}+\Expc[T_{(n)}]+1.
\end{equation}
Thus, 
\begin{equation}
    \frac{ar}{n}+\alpha H_n +\frac{1}{n} \leq \Expc[T_{\text{uncoded}}] \leq  \frac{ar}{n}+\alpha H_n +1,
\end{equation}
which concludes
\begin{equation}\label{eq:uncoded}
    \Expc[T_{\text{uncoded}}]=\Theta(H_n)=\Theta(\log n).
\end{equation}
According to Theorem \ref{theorem:main}, for a $(n,k)-$MDS coded scheme with $k=\Theta(n)$, we have $\Expc[T^{\text{R}_{\text{III}}}_{\text{coded}}]=\Theta(1)$. Considering the trivial lower and upper bounds in (\ref{eq:bounds}) implies that $\Expc[T^{\text{R}_{\text{I}}}_{\text{coded}}]=t_0+\Expc[T_{(k)}]+o(1)=\Theta(1)$. Together with (\ref{eq:uncoded}), the claim is concluded.
\end{proof}

\section{Conclusion and Future Work}

We considered the problem of coded computation over a wireless network with master-worker setup and straggling workers. In this network, only one worker can transmit message to the master node at a time. We proposed the use of optimal MDS-coded schemes to minimize the total run-time of the distributed computation algorithm. In particular, while the exact characterization of total run-time is not tractable, we considered 3 asymptotic regimes (determined by how the communication and computation times are scaled with the number of workers) and precisely characterized the total run-time of the distributed algorithm and the optimum coding strategy in each regime. We further showed that coded schemes are $\Theta(\log(n))$ times faster than uncoded schemes in the regime of practical interest.

The interference model in this paper is not the most general model in wireless networks. It would be interesting to consider the case where the dependency of workers is characterized by their subsets that can be activated simultaneously. In this case, since the symmetry of workers will be broken, clearly one needs to assign different workload to different workers, which makes finding the optimal coding strategy and characterizing the total run-time more challenging. Another interesting direction is to generalize the problem to heterogeneous setup where each worker node has different computation capability.

\bibliographystyle{ieeetr}
\bibliography{ref}

\end{document}